\documentclass{llncs}

\usepackage{times}
\usepackage{mathptm}

\usepackage{algorithmic}
\usepackage[ruled]{algorithm}
\usepackage{cite}
\usepackage{listings}
\usepackage{graphicx}
\usepackage{url}

\lstset{mathescape=true}

\long\def\omit#1{\relax}    

\def \figurescale {.85} 

\begin{document}

\lstset{language=Python}

\title{Optimal Embedding Into Star Metrics}

\author{David Eppstein \and Kevin A. Wortman}

\institute{Department of Computer Science \\
Universitiy of California, Irvine \\
\email{\{eppstein, kwortman\}@ics.uci.edu}
}


\maketitle

\begin{abstract}
We  present an $O(n^3 \log^2 n)$-time algorithm for  the following problem: given a finite metric space $X$, create a star-topology network with the points of $X$ as its leaves, such that the distances in the star are at least as large as in $X$, with minimum dilation. As part of our algorithm, we solve in the same time bound the \emph{parametric negative cycle detection problem}: given a directed graph with edge weights that are increasing linear functions of a parameter $\lambda$, find the smallest value of $\lambda$ such that the graph contains no negative-weight cycles.
\end{abstract}

\section{Introduction}
\label{section:introduction}
A \emph{metric space} is a set of sites separated by symmetric positive distances that obey the triangle inequality. If $X$ and $Y$ are metric spaces and $f:X\mapsto Y$ does not decrease the distance between any two points, the \emph{dilation} or \emph{stretch factor} of $f$ is
$$\sup_{x_1,x_2\in X}\frac{d(f(x_1),f(x_2))}{d(x_1,x_2)}.$$
We define a \emph{star metric} to be a metric space in which there exists a \emph{hub} $h$ such that, for all $x$ and $y$, $d(x,y)=d(x,h)+d(h,y)$. Given the distance matrix of an $n$-point metric space $X$, we would like to construct a function $f$ that maps $X$ into a star metric $Y$, that does not decrease distances, and that has as small a dilation as possible. In this paper we describe an algorithm that finds the optimal $f$ in time $O(n^3 \log^2 n)$.
Our problem may be seen as lying at the confluence of three major areas of algorithmic research:

\smallskip\noindent
{\bf Spanner construction.} A \emph{spanner} for a metric space $X$ is a graph $G$ with the points of $X$ as its vertices and weights (lengths) on its edges, such that path lengths in $G$ equal or exceed those in $X$; the dilation of $G$ is measured as above as the maximum ratio between path length and distance in $X$. The construction of sparse spanners with low dilation has been extensively studied~\cite{Epp-HCG-00} but most papers in this area limit themselves to bounding the dilation of the spanners they construct rather than constructing spanners of optimal dilation.  Very few optimal spanner construction problems are known to be solvable in polynomial time; indeed, some are known to be NP-complete \cite{klein_kutz} and others NP-hard \cite{1273697,edmonds_2008}. Our problem can be viewed as constructing a spanner in the form of a star (a tree with one non-leaf node) that has optimal dilation.

\smallskip\noindent
{\bf Metric embedding.} There has been a large amount of work within the algorithms community on \emph{metric embedding} problems, in which an input metric space is to be embedded into a simpler target space with minimal distortion~\cite{Lin-ICM-02}; typical target spaces for results of this type include spaces with $L_p$ norms and convex combinations of tree metrics. As with spanners, there are few results of this type in which the minimum dilation embedding can be found efficiently; instead, research has concentrated on proving bounds for the achievable dilation. Our result provides an example of a simple class of metrics, the star metrics, for which optimal embeddings may be found efficiently. As with embeddings into low-dimensional $L_p$ spaces, our technique allows an input metric with a quadratic number of distance relationships to be represented approximately using only a linear amount of information.

\smallskip\noindent
{\bf Facility location.} In many applications one is given a collection of \emph{demand points} in some space and must select one or more \emph{supply points} that maximize some objective function. For instance, the 1-median (minimize the sum of all distances from demand points to a single supply point) and 1-center (minimize the greatest distance between any destination point and a single supply point) can be applied to operational challenges such as deciding where to build a radio transmitter or railroad hub so as to maximize its utility \cite{drezner_2002}.  In a similar vein the problem discussed in this paper may be seen as selecting a single supply point to serve as the hub of a star-topology network.  In this context dilation corresponds to the worst multiplicative cost penalty imposed on travel between any pair of input points due to the requirement that all travel is routed through the hub (center) point.  Superficially, our problem differs somewhat from typical facility location problems in that the star we construct has a hub that is not given as part of the input. However, it is possible to show that the hub we find belongs to the \emph{tight span} of the input metric space~\cite{DreHubMou-DM-01}, a larger metric space that has properties similar to those of $L_\infty$ spaces. Viewing our problem as one of selecting the optimal hub point from the tight span gives it the format of a facility location problem.

Previously~\cite{Eppstein200727} we considered similar minimum dilation star problems in which the input and output were both confined to low-dimensional Euclidean spaces. As we showed, the minimum-dilation star with unrestricted hub location may be found in $O(n\log n)$ expected time in any bounded dimension, and for $d=2$ the optimal hub among the input points may be selected in expected time $O(n \, 2^{\alpha(n)} \log^2 n)$, where $\alpha(n)$ is the inverse Ackermann function. For the general metric spaces considered here, the difficulty of the problems is reversed: it is trivial to select an input point as hub in time $O(n^3)$, while our results show that an arbitrary hub may be found in time $O(n^3 \log^2 n)$.

As we discuss in Section \ref{section:linear_program}, the minimum dilation star problem can be represented as a linear program; however solving this program directly would give a running time that is a relatively high order polynomial in $n$ and in the number of bits of precision of the input matrix.  In this paper we seek a faster, purely combinatorial algorithm whose running time is strongly polynomial in~$n$.  Our approach is to first calculate the dilation $\lambda^*$ of the optimal star.  We do this by forming a \emph{$\lambda$-graph} $G(\lambda)$: a directed graph with weights in the form $w(e)=\lambda \cdot m_e + b_e$ for parameters $m_e \geq 0$ and $b_e$ determined from the input metric.  $G(\lambda)$ has the property that it contains no negative weight cycles if and only if there exists a star with dilation $\lambda$.  Next we calculate $\lambda^*$, the smallest value such that $G(\lambda^*)$ contains no negative-weight cycles, which is also the dilation of the star we will eventually create.  Finally we use $G(\lambda)$ and $\lambda^*$ to compute the lengths of the edges from the star's center to each site, and output the resulting star.

Our algorithm for computing $\lambda^*$, the smallest parameter value admitting no negative cycles in a parametrically weighted graph, warrants independent discussion.  To our knowledge no known strongly polynomial algorithm solves this problem in full generality.  Karp and Orlin \cite{ko80} gave an $O(mn)$ time algorithm for a problem in which the edge weights have the same form $w(e) = \lambda \cdot m_e + b_e$ as ours, but where each $m_e$ is restricted to the set $\{0, 1\}$.  If all $m_e=1$, the problem is equivalent to finding the minimum mean cycle in a directed graph \cite{karp78}, for which several algorithms run in $O(mn)$ time \cite{dasdan98experimental}.  In our problem, each $m_e$ may be any nonnegative real number;  it is not apparent how to adapt the algorithm of Karp and Orlin to our problem. Gusfield provided an upper bound \cite{gusfield80} on the number of breakpoints of the function describing the shortest path length between two nodes in a $\lambda$-graph, and Carstensen provided a lower bound \cite{carstensen84} for the same quantity; both bounds have the form $n^{\Theta(\log n)}$.  Hence any algorithm that constructs a piecewise linear function that fully describes path lengths for the entire range of $\lambda$ values takes at least $n^{\Theta(\log n)}$ time.   In Section \ref{section:cycle_detection} we describe our algorithm, which is based on a dynamic programming solution to the all pairs shortest paths problem.  Our algorithm maintains a compact piecewise linear function representing the shortest path length for each pair of vertices over a limited range of $\lambda$ values, and iteratively contracts the range until a unique value $\lambda^*$ can be calculated.  Thus it avoids Carstensen's lower bound by finding only the optimal $\lambda^*$, and not the other breakpoints of the path length function, allowing it to run in $O(n^3 \log^2 n)$ time.

\begin{figure}
\centering
\scalebox{\figurescale}{\includegraphics{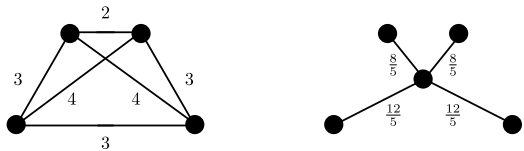}}
\caption{Example of a metric space and its optimal star, which has dilation $\lambda^*=8/5$.}
\label{figure:example}
\end{figure}

\section{Linear Programming Formulation}
\label{section:linear_program}

In this section we formally define the overall minimum dilation star problem and describe how to solve it directly using linear programming.  Our eventual algorithm never solves nor even constructs this linear program directly; however stating the underlying linear program and its related terminology will aid our later exposition.

The input to our algorithm is a finite \emph{metric space}.  Formally, a metric space $\mathcal{X}$ is a tuple $\mathcal{X}=(X,d_X)$, where $X$ is a set of sites and the function $d_X$ maps any pair of sites to the nonnegative, real distance between them.  The following \emph{metric conditions} also hold for any $x,y,z \in X$:
\begin{enumerate}
\item $d_X(x,y) = 0$ if and only if $x=y$ (positivity);
\item $d_X(x,y) = d_X(y,x)$ (symmetry); and
\item $d_X(x,y) + d_X(y,z) \geq d_X(x,z)$ (the triangle inequality).
\end{enumerate}
\noindent The input to our algorithm is a finite metric space $\mathcal{S}=(S,d_S)$; we assume that the distance $d_S(x,y)$ between any $x,y \in S$ may be reported in constant time, for instance by a lookup matrix.

A \emph{star} is a connected graph with one \emph{center} vertex.  A star contains an edge between the center and every other vertex, but no other edges.  Hence any star is a tree of depth 1, and every vertex except the center is a leaf.  Our algorithm must output a weighted star $H$ whose leaves are the elements $S$ from the input.  The edge weights in $H$ must be at least as large as the distances in $S$, and must obey reflexivity and the triangle inequality.  In other words, if $d_{H}(x,y)$ is the length of a shortest path from $x$ to $y$ in $H$, then $d_{H}(x,y) \geq d_{S}(x,y)$, $d_{H}(x,y) = d_{H}(y,x)$, and $d_{H}(x,y) + d_{H}(y,z) \geq d_{H}(x,z)$ for any vertices $x,y,z$ in $H$.

We also ensure that the \emph{dilation} of $H$ is minimized.  For any two vertices $u,v$ in some weighted graph $G$ whose vertices are points in a metric space, the dilation between $u$ and $v$ is
\[ \delta_G(u,v) = \frac{d_G(u,v)}{d_S(u,v)} .\]
\noindent The dilation of the entire graph $G$ is the largest dilation between any two vertices, i.e.
\[ \Delta_G = \max_{u,v \in G} \delta_G(u,v) . \]
Our output graph $H$ is a star; hence every path between two leaves has two edges, so if we apply the definition of dilation to $H$, we obtain
\[ \delta_{H}(u,v) = \frac{d_{H}(u,c) + d_{H}(c,v)}{d_S(u,v)} = \frac{w_{u,c} + w_{c,v}}{d_S(u,v)} \]
\noindent where $w_{x,y}$ is the weight of the edge connecting $x$ and $y$ in $H$.  Hence the dilation of $H$ may be computed by
\[ \Delta_{H} = \max_{u,v \in H} \frac{w_{u,c} + w_{c,v}}{d_S(u,c)} .\]
\noindent This equation lays the foundation for our formulation of the minimum dilation star problem as a linear program.

\begin{definition}
\label{definition:L}
Let $\mathcal{L}$ be the following linear program, defined over the variables $\lambda$ and $c_v$ for every $v \in S$:

\begin{center}
\noindent Minimize $\lambda$
\end{center}
\noindent such that for any $v \in S$,
\begin{equation}
\label{equation:distance_nonzero}
c_v \geq 0 ,
\end{equation}
\noindent and for any $v,w \in S$,
\begin{eqnarray}
\label{equation:distance_triangle}c_v + c_w & \geq & d_S(v,w) \\ 
\label{equation:distance_dilation}c_v + c_w & \leq & \lambda \cdot d_S(v,w) .
\end{eqnarray}
\noindent Let $\lambda^*$ be the value assigned to $\lambda$ in the optimal solution to $\mathcal{L}$.  In other words, $\lambda^*$ is the smallest dilation admitted by any set of distances satisfying all the constraints of $\mathcal{L}$.
\end{definition}

\noindent $\mathcal{L}$ is clearly feasible.  For example, if $D=\max_{x,y \in S} d_S(x,y)$, then the solution $\forall v \enspace c_v=D$ and $\lambda=2D/\min_{x,y \in S} d_S(x,y)$ is a feasible, though poor, solution.

\begin{lemma}
For any optimal solution of $\mathcal{L}$, the value of $\lambda$ gives the minimum dilation of any star network spanning $S$, and the $c_v$ values give the edge lengths of an optimal star network spanning $S$.
\end{lemma}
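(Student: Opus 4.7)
The plan is to establish a bijection between feasible solutions of $\mathcal{L}$ and valid star networks spanning $S$, under which the objective value $\lambda$ corresponds exactly to the dilation $\Delta_H$. Since the objective minimizes $\lambda$, the optimum of $\mathcal{L}$ then coincides with the minimum dilation, and the $c_v$ values read off the edge lengths of the optimal star.

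First I would handle the direction from $\mathcal{L}$ to stars: given any feasible solution $(\lambda, \{c_v\})$, construct a star $H$ with a new center vertex $c$ and an edge of weight $c_v$ from $c$ to each site $v \in S$. I would then check the three conditions required of $H$. Nonnegativity of edge weights is immediate from constraint (\ref{equation:distance_nonzero}); the graph-distance function $d_H$ inherits symmetry and the triangle inequality automatically because $H$ is a tree with nonnegative edge weights (so $d_H(u,v) = c_u + c_v$ for distinct leaves and $d_H$ is a shortest-path metric). The required bound $d_H(u,v) \geq d_S(u,v)$ is precisely constraint (\ref{equation:distance_triangle}), and applying constraint (\ref{equation:distance_dilation}) pointwise gives $\delta_H(u,v) = (c_u + c_v)/d_S(u,v) \leq \lambda$ for every pair, so $\Delta_H \leq \lambda$.

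Next I would handle the reverse direction: given any star network $H$ with center $c$, edge weights $w_{c,v}$, and dilation $\Delta_H$, define $c_v := w_{c,v}$ and $\lambda := \Delta_H$. Nonnegativity of $c_v$ is inherited from the edge weights of $H$. The hypothesis $d_H(u,v) \geq d_S(u,v)$ becomes $c_u + c_v \geq d_S(u,v)$, yielding (\ref{equation:distance_triangle}), and the definition of dilation gives $(c_u + c_v)/d_S(u,v) \leq \Delta_H = \lambda$ for every pair, yielding (\ref{equation:distance_dilation}). Thus $(\lambda, \{c_v\})$ is feasible for $\mathcal{L}$ with objective value $\Delta_H$.

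Combining the two directions, the infimum of $\lambda$ over feasible solutions equals the infimum of $\Delta_H$ over valid star networks, and extremal solutions correspond to one another, proving both claims. The only subtlety worth double-checking is that symmetry, positivity of $d_H$ between distinct leaves, and the triangle inequality on $H$ follow automatically from nonnegative edge weights together with constraint (\ref{equation:distance_triangle}) (which, since $d_S(u,v) > 0$ for $u \neq v$, forces $c_u + c_v > 0$); no additional constraints on $\mathcal{L}$ are needed. This is the main thing to be careful about, but it is not a serious obstacle.
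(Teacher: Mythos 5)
Your proof is correct and takes essentially the same approach as the paper's: both argue that the constraints of $\mathcal{L}$ exactly encode the conditions defining a valid star network with dilation at most $\lambda$, so that minimizing $\lambda$ yields the minimum-dilation star. Your version is more explicit, spelling out both directions of the correspondence between feasible solutions and stars and verifying that $d_H$ automatically inherits the metric axioms, whereas the paper's proof asserts the same points more tersely.
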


\begin{proof}
Each variable $c_v$ corresponds to the weight $w_{v,c}$ of the edge between $c$ and $v$ in $H$.  Inequality \ref{equation:distance_nonzero} ensures that the distances are nonnegative, Inequality
\ref{equation:distance_triangle} ensures that they obey the triangle inequality, and Inequality
\ref{equation:distance_dilation} dictates that $\lambda$ is a largest dilation among any pair of sites from $S$.  The value of $\lambda$ is optimal since $\mathcal{L}$ is defined to minimize $\lambda$.
\end{proof}

Unfortunately $\mathcal{L}$ contains $O(n)$ variables and $O(n^2)$ constraints.  Such a program could be solved using general purpose techniques in a number of steps that is a high-order polynomial in $n$ and the number of bits of precision used, but our objective is to obtain a fast algorithm whose running time is strongly polynomial in $n$.  Megiddo showed \cite{megiddo:347} that linear programs with at most two variables per inequality may be solved in strongly polynomial time; however our type (\ref{equation:distance_dilation}) inequalities have three variables, so those results cannot be applied to our problem.

\section{Reduction to Parameteric Negative Weight Cycle Detection}
\label{section:reduction}

In this section we describe a subroutine that maps the set of sites $S$ to a directed, parametrically-weighted $\lambda$-graph $G(\lambda)$.  Every edge of $G(\lambda)$ is weighted according to a nondecreasing linear function of a single graph-global variable $\lambda$.  An important property of $G(\lambda)$ is that the set of values of $\lambda$ that cause $G(\lambda)$ to contain a negative weight cycle is identical to the set of values of $\lambda$ that cause the linear program $\mathcal{L}$ to be infeasible.  Thus any assignment of $\lambda$ for which $G(\lambda)$ contains no negative weight cycles may be used in a feasible solution to $\mathcal{L}$.

\begin{definition}
\label{definition:lambda_graph}
A \emph{$\lambda$-graph} is a connected, weighted, directed graph, where the weight $w(e)$ of any edge $e$ is defined by a linear function in the form
\[ w(e) = \lambda \cdot m_e + b_e ,\]
\noindent where $m_e$ and $b_e$ are real numbers and $m_e \geq 0$.
\end{definition}

\begin{definition}
\label{definition:G}
Let $G(\lambda)$ be the $\lambda$-graph corresponding to a particular set of input sites $S$.  $G(\lambda)$ has vertices $\overline{s}$ and $\underline{s}$ for each $s \in S$.  For $s,t \in S$, $G(\lambda)$ has an edge of length $-d_S(s,t)$ from $\underline{s}$ to $\overline{t}$, and for $s \ne t$, $G(\lambda)$ has an edge of length $\lambda \cdot d_S(s,t)$ from $\overline{s}$ to $\underline{t}$.
\end{definition}

\noindent Note that an edge from $\underline{s}$ to $\overline{t}$ has weight $-d_S(s,s)=0$ when $s=t$.  An example $\lambda$-graph $G(\lambda)$ for $n=3$ is shown in Figure~\ref{figure:G}.

\begin{figure}
\centering
\scalebox{\figurescale}{\includegraphics{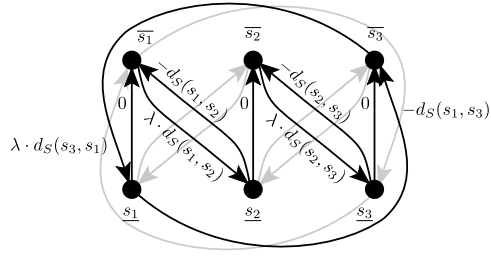}}
\caption{The graph $G(\lambda)$ for $n=3$.  The weights of grayed edges are omitted.}
\label{figure:G}
\end{figure}

\begin{lemma}
\label{lemma:make_graph_basics}
$G(\lambda)$ may be constructed in $O(n^2)$ time.
\end{lemma}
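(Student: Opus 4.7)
The plan is to proceed by a direct counting argument: we enumerate the vertices and edges described in Definition~\ref{definition:G} and observe that each can be created in constant time, given the constant-time distance oracle on $S$ assumed at the start of Section~\ref{section:linear_program}.

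First I would instantiate the vertex set. For each $s \in S$ we allocate the two vertices $\overline{s}$ and $\underline{s}$, giving $2n$ vertices in $O(n)$ total work. Next I would enumerate the edge set in two passes over the ordered pairs $(s,t) \in S \times S$. In the first pass, for every such pair we add an edge from $\underline{s}$ to $\overline{t}$ with slope $m_e = 0$ and intercept $b_e = -d_S(s,t)$; this contributes $n^2$ edges. In the second pass, for every pair with $s \neq t$, we add an edge from $\overline{s}$ to $\underline{t}$ with slope $m_e = d_S(s,t)$ and intercept $b_e = 0$; this contributes $n(n-1)$ edges. Each of the at most $2n^2$ edges is created by a single distance lookup and a constant number of arithmetic operations to record its $(m_e, b_e)$ pair, so the edge construction costs $O(n^2)$ in total.

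No step here presents a genuine obstacle; the only point worth noting is that storing the parametric weights as $(m_e, b_e)$ pairs (rather than as numeric weights) is what keeps later parts of the algorithm compatible with the $\lambda$-graph formalism of Definition~\ref{definition:lambda_graph}. Summing the vertex and edge work yields the claimed $O(n^2)$ running time, which completes the proof.
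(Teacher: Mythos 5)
Your proposal is correct and takes essentially the same approach as the paper's proof, which simply observes that $G(\lambda)$ has $2n$ vertices and $O(n^2)$ edges, each initialized in constant time. You spell out the edge enumeration and the $(m_e, b_e)$ representation in more detail, but the argument is the same direct counting.
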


\begin{proof}
$G(\lambda)$ has $2n$ vertices and $O(n^2)$ edges, each of which may be initialized in constant time.
\end{proof}

\omit{
We now prove that feasibility of $\mathcal{L}$ is equivalent to the absence of negative weight cycles in $G(\lambda)$.  We begin by showing that feasibility implies the absence of negative weight cycles.
}

\begin{lemma}
\label{lemma:nnwc_implies_feasible}
If $\lambda \geq 1$ is assigned such that $\mathcal{L}$ has a feasible solution, then $G(\lambda)$ contains no negative weight cycle.
\end{lemma}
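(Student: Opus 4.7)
The plan is to exploit the bipartite structure of $G(\lambda)$: every edge goes between the $\underline{\cdot}$-side and the $\overline{\cdot}$-side of the vertex set, so any directed cycle alternates between the two sides and in particular has even length. Pick an arbitrary cycle and, starting at one of its $\underline{\cdot}$-vertices, write it as
\[ \underline{s_1} \to \overline{t_1} \to \underline{s_2} \to \overline{t_2} \to \cdots \to \underline{s_k} \to \overline{t_k} \to \underline{s_1} . \]
By Definition~\ref{definition:G} the existence of each $\overline{t_i} \to \underline{s_{i+1}}$ edge (indices mod $k$) requires $t_i \ne s_{i+1}$, and the cycle's total weight splits neatly as
\[ W = -\sum_{i=1}^{k} d_S(s_i, t_i) \;+\; \lambda \sum_{i=1}^{k} d_S(t_i, s_{i+1}) . \]

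Next I would feed in a feasible solution $\{c_v\}_{v \in S}$ of $\mathcal{L}$ at the given $\lambda$. Applying the triangle constraint~(\ref{equation:distance_triangle}) to each pair $(s_i,t_i)$ and the dilation constraint~(\ref{equation:distance_dilation}) to each pair $(t_i,s_{i+1})$ gives
\[ d_S(s_i, t_i) \leq c_{s_i} + c_{t_i}, \qquad \lambda \cdot d_S(t_i, s_{i+1}) \geq c_{t_i} + c_{s_{i+1}} . \]
Summing each inequality over $i = 1, \dots, k$ and reindexing $c_{s_{i+1}}$ as $c_{s_i}$ via the cyclic shift, \emph{both} bounds collapse to the same quantity $\sum_{i=1}^{k} (c_{s_i} + c_{t_i})$. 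Chaining them yields $\lambda \sum_i d_S(t_i, s_{i+1}) \geq \sum_i d_S(s_i, t_i)$, i.e.\ $W \geq 0$, so the chosen cycle is not negative.

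The only even mildly delicate step is the cyclic reindexing that makes the two $c$-sums coincide; once this bookkeeping is done, the two constraints of $\mathcal{L}$ do precisely the work they were designed for, with~(\ref{equation:distance_triangle}) controlling the $-d_S$ edges and~(\ref{equation:distance_dilation}) controlling the $\lambda d_S$ edges. A small sanity check is that the degenerate $\underline{s}\to\overline{s}$ edges of weight $0$ fit the formula harmlessly. I would also note, as a remark rather than a necessity, that the hypothesis $\lambda \geq 1$ is automatic from feasibility of $\mathcal{L}$ (combining (\ref{equation:distance_triangle}) and (\ref{equation:distance_dilation}) for any $v \ne w$ with $d_S(v,w)>0$ forces $\lambda \geq 1$), and the argument above never invokes it directly.
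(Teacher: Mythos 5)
Your proposal is correct and follows essentially the same route as the paper: exploit bipartiteness to decompose the cycle weight into alternating $-d_S$ and $\lambda\, d_S$ terms, bound the former above via constraint~(\ref{equation:distance_triangle}) and the latter below via constraint~(\ref{equation:distance_dilation}), and observe that the resulting $c$-sums cancel. Your organization (two global sums rather than a written-out telescoping chain) and the observation that $\lambda \geq 1$ is never actually invoked are both accurate, but the argument is the same one.
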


\begin{proof}
Since $G(\lambda)$ is bipartite, any sequence of edges $M$ traversed by a cycle in $G(\lambda)$ has even length.  Depending on which partition $M$ begins with, the sequence either takes the form
\[ M = \langle
(\overline{s_{i_1}}, \underline{s_{i_2}}),
(\underline{s_{i_2}}, \overline{s_{i_3}}),
(\overline{s_{i_3}}, \underline{s_{i_4}}),
\ldots,
(\underline{s_{i_k}}, \overline{s_{i_1}}) 
\rangle \]
\noindent or
\[ M = \langle
(\underline{s_{i_1}}, \overline{s_{i_2}}),
(\overline{s_{i_2}}, \underline{s_{i_3}}),
(\underline{s_{i_3}}, \overline{s_{i_4}}),
\ldots,
(\overline{s_{i_k}}, \underline{s_{i_1}})
\rangle \, , \]
\noindent where $s_{i_1}, s_{i_2}, \ldots, s_{i_k}$ are vertices from $G(\lambda)$.  In either case, the cycle has weight
\begin{equation}
\label{cycle_weight}
w(M) = \lambda \cdot d_S(s_{i_1}, s_{i_2}) - d_S(s_{i_2}, s_{i_3}) + \lambda \cdot d_S(s_{i_3}, s_{i_4}) - \ldots - d_S(s_{i_k}, s_{i_1}) 
\end{equation}
\noindent by the commutativity of addition.  Since $\mathcal{L}$ is feasible, there exists some set of distances $C$ satisfying the constraints of $\mathcal{L}$, i.e.
\begin{equation}
\label{ineq1}
c_x + c_y \leq \lambda \cdot d_S(x,y) \Rightarrow (c_x+c_y)/\lambda \leq d_S(x,y)
\end{equation}
\noindent and
\begin{equation}
\label{ineq2}
c_x + c_y \geq d_S(x,y) \Rightarrow -(c_x+c_y) \leq -d_S(x,y) .
\end{equation}
\noindent Substituting (\ref{ineq1}) and (\ref{ineq2}) into (\ref{cycle_weight}), we obtain
\begin{eqnarray*}
w(M) &\geq& \lambda ((c_{i_1}+c_{i_2})/\lambda) - (c_{i_2} + c_{i_3}) + \lambda((c_{i_3} + c_{i_4})) - \ldots - (c_{i_k} + c_{i_1}) \\
&\geq& (c_{i_1}+c_{i_2}) - (c_{i_2} + c_{i_3}) + (c_{i_3} + c_{i_4}) - \ldots - (c_{i_k} + c_{i_1}) \\
&\geq& c_{i_1}-c_{i_1}+c_{i_2}-c_{i_2}+\ldots+c_{i_k}-c_{i_k} \\
&\geq& 0.
\end{eqnarray*}
\end{proof}

\omit{
We now summarize this section's results as a single theorem.
}

\begin{theorem}
\label{theorem:create_graph}
Any set $S$ of $n$ sites from a metric space may be mapped to a $\lambda$-graph $G(\lambda)$ with $O(n)$ vertices, such that for any $\lambda \geq 1$, $G(\lambda)$ contains a negative weight cycle if and only if $\mathcal{L}$ is infeasible for that value of $\lambda$.  The mapping may be accomplished in $O(n^2)$ time.
\end{theorem}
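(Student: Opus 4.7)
The theorem bundles three claims: the $O(n)$ vertex bound and $O(n^2)$ construction time (immediate from Lemma~\ref{lemma:make_graph_basics}), the feasibility $\Rightarrow$ no-negative-cycle direction (already established, despite the misleading label, by Lemma~\ref{lemma:nnwc_implies_feasible}), and the remaining converse direction: if $G(\lambda)$ has no negative weight cycle for some $\lambda \geq 1$, then $\mathcal{L}$ is feasible for that $\lambda$. The plan is to cite the first two and focus the proof on constructing a feasible solution to $\mathcal{L}$ from shortest-path potentials in $G(\lambda)$.

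The construction goes as follows. Augment $G(\lambda)$ with an auxiliary source $r$ and zero-weight edges from $r$ to every vertex; no new cycles are created, so the absence of negative cycles is preserved and the shortest-path distance $\pi(u)$ from $r$ to each vertex $u$ is well defined. Standard shortest-path relaxation gives, for every edge $(u,v)$ of weight $w$, the inequality $\pi(v) \leq \pi(u) + w$. Applied to the two edge types of Definition~\ref{definition:G}, this yields
\begin{equation*}
\pi(\underline{s}) - \pi(\overline{t}) \geq d_S(s,t) \qquad \text{and} \qquad \pi(\underline{t}) - \pi(\overline{s}) \leq \lambda\cdot d_S(s,t)\ \ (s\neq t).
\end{equation*}
I then define the candidate solution $c_v := \bigl(\pi(\underline{v}) - \pi(\overline{v})\bigr)/2$ for each $v \in S$ and verify the three constraints of $\mathcal{L}$. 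Non-negativity (\ref{equation:distance_nonzero}) follows from the first inequality above with $s=t=v$, since the corresponding edge has weight $-d_S(v,v)=0$. The triangle-type constraint (\ref{equation:distance_triangle}) follows by adding the first inequality instantiated at $(s,t)=(v,w)$ and at $(s,t)=(w,v)$, which gives $2(c_v+c_w) \geq 2\,d_S(v,w)$. The dilation constraint (\ref{equation:distance_dilation}) follows symmetrically: writing $c_v+c_w = \tfrac12\bigl[(\pi(\underline{v})-\pi(\overline{w})) + (\pi(\underline{w})-\pi(\overline{v}))\bigr]$ and applying the second inequality to both summands yields $c_v+c_w \leq \lambda\, d_S(v,w)$.

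The main obstacle I anticipate is getting the definition of $c_v$ to produce \emph{both} the triangle-inequality lower bound and the dilation upper bound with the right scaling. Using $\pi(\underline{v})-\pi(\overline{v})$ directly would overshoot by a factor of two on each side; the factor $1/2$ in $c_v$ is what balances the two pairs of inequalities so that they combine cleanly into (\ref{equation:distance_triangle}) and (\ref{equation:distance_dilation}). Once this is pinned down, the remaining steps are routine substitutions, and combining them with the two previously established components completes the proof of the theorem.
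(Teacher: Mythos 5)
Your proposal is correct and matches the paper's own argument: the paper likewise cites Lemma~\ref{lemma:make_graph_basics} and Lemma~\ref{lemma:nnwc_implies_feasible} and discharges the converse direction via the construction of Section~\ref{section:extracting_weights}, which adds a zero-weight source, computes shortest-path distances, and sets $c_v = (l(\underline{v}) - l(\overline{v}))/2$, verifying the three constraints exactly as you do. The only cosmetic difference is that you inline that argument (and attach the source to all vertices rather than just the $\overline{v}$'s), which changes nothing of substance.
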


\begin{proof}
By Lemma \ref{lemma:make_graph_basics}, $G(\lambda)$ may be created in $O(n^2)$ time, and by Lemma \ref{lemma:nnwc_implies_feasible}, feasibility of $\mathcal{L}$ implies an absence of negative cycles in $G(\lambda)$.  Section \ref{section:extracting_weights} describes an algorithm that, given a value $\lambda$ for which $G(\lambda)$ has no negative cycle, generates an edge length $c_v$ for every $v \in S$ that obeys the constraints of $\mathcal{L}$.  Thus, by the correctness of that algorithm, an absence of negative cycles in $G(\lambda)$ implies feasibility of $\mathcal{L}$.
\end{proof}

\section{Searching for $\lambda^*$}
\label{section:cycle_detection}

We now turn to the problem of computing the quantity $\lambda^*$.  This problem is an example of \emph{parametric negative weight cycle detection}: given a $\lambda$-graph $G(\lambda)$, find $\lambda^*$, the smallest value such that $G(\lambda^*)$ contains no cycles of negative weight.  Our algorithm functions by maintaining a range $[\lambda_1, \lambda_2]$ which is known to contain $\lambda^*$.  Initially the range is $[-\infty, +\infty]$; over $O(\log n)$ iterations, the range is narrowed until it is small enough that $\lambda^*$ may be calculated easily.  This approach is similar in spirit to Megiddo's general parametric search framework \cite{804326,322410}, which, in loose terms, searches for the solution to an optimization problem by simulating the execution of a parallel algorithm for the corresponding decision problem.

Our algorithm is presented in Listing \ref{algorithm:computing_lambda_star}.  It is an adaptation of a parallel all pairs shortest paths algorithm based on matrix squaring \cite{savage77}.  The original algorithm uses a matrix $D_i(u,v)$, which stores the weight of the shortest path from $u$ to $v$ among paths with at most $2^i$ edges.  Each $D_i(u,v)$ may be defined as the smallest sum of two cells of $D_{i-1}$, and $D_{\lceil \log_2 n \rceil}$ defines the shortest paths in the graph.  In the context of that original algorithm, edges and paths had real-number lengths, so it was sufficient to store real numbers in $D_i$.  In the context of this paper, an edge's weight is a linear function of a variable $\lambda$; hence the weight of a path is a linear function of $\lambda$.  Unfortunately the minimum-cost path between $u$ and $v$ may be different for varying values of $\lambda$, so the weight of the shortest path from $u$ to $v$ is defined by the minima of one or more linear functions of $\lambda$.  Such a lower envelope of linear functions may be represented by a piecewise linear function; hence each element of $D_i$ must store a piecewise linear function.  Without further attention the number of breakpoints in these piecewise linear functions would grow at every iteration, and eventually operating on them would dominate our algorithm's running time.  To address this, at every iteration we choose a new interval $[\lambda_1, \lambda_2]$ that contains no breakpoints, so that every $D_i$ may be compacted down to a single linear function.

\floatname{algorithm}{Listing}
\begin{algorithm}[p]
\caption{Computing the quantity $\lambda^*$.}
\label{algorithm:computing_lambda_star}
\begin{algorithmic}[1]
\STATE {\bf INPUT:} A $\lambda$-graph $G(\lambda)$ with $n$ vertices $V$.
\STATE {\bf OUTPUT:} $\lambda^*$, the smallest value of $\lambda$ such that $G(\lambda)$ has no negative-weight cycles.
\STATE Let $\lambda_1 = -\infty$ and $\lambda_2 = +\infty$.
\STATE {\bf INVARIANT:} $\lambda_1 \leq \lambda^* \leq \lambda_2$
\STATE {\bf INVARIANT:} $D_i(u,v)$ contains a linear function that represents the length of the shortest path from $u$ to $v$ among the subset of paths that use at most $2^i$ edges, as a function of $\lambda$, for any $\lambda \in [\lambda_1,\lambda_2]$
\STATE Let $D_0$ be an $n \times n$ matrix of piecewise linear functions.
\STATE Initialize $D_0(u,v) \equiv \left\{ \begin{array}{ll}
                            0 & \mbox{if $u=v$} \\
                            \lambda\cdot m_e + b_e & \mbox{if $G(\lambda)$ contains an edge $e$ from $u$ to $v$} \\
                            +\infty & \mbox{otherwise}
                            \end{array} \right .$
\FOR{ $i=1, 2, \ldots, \lceil \log_2 n \rceil$ }
  \FOR{ $u, v \in V$ }
    \STATE $D_i(u,v) \equiv \min_{w \in V} [D_{i-1}(u,w) + D_{i-1}(w,v)]$ \label{line:define_D}
  \ENDFOR

  \STATE Let $B$ be the set of breakpoints of the piecewise linear functions stored in the entries of $D_i$.

  \STATE Perform a binary search among the values in $B$, seeking an interval bounded by two consecutive breakpoints that contains $\lambda^*$. At each step, the test value of the binary search is less than $\lambda^*$ if and only if setting $\lambda$ equal to the test value causes $G(\lambda)$ to contain a negative cycle; use the Bellman--Ford shortest paths algorithm to determine whether this is the case.

  \STATE Set $\lambda_1$ and $\lambda_2$ to the endpoints of the interval found in the previous step.

  \FOR{ $u, v \in V$ }
    \STATE Replace the piecewise linear function $D_i(u,v)$ with the equivalent linear function over the range $[\lambda_1, \lambda_2]$. \label{line:simplify_D}
  \ENDFOR  
\ENDFOR
\STATE Compute $\lambda^*$, the smallest value in the range $[\lambda_1, \lambda_2]$, such that $D_k(v,v) \geq 0$ for every $v \in V$.
\STATE \textbf{Return} $\lambda^*$.
\end{algorithmic}
\end{algorithm}

\begin{lemma}
\label{lemma:lambda_star_correct}
For any $\lambda \in [\lambda_1, \lambda_2]$, the function $D_i(u,v)$ as computed in the listing evaluates to the weight of the shortest path from $u$ to $v$ among paths with at most $2^i$ edges, or $+\infty$ if no such path exists.
\end{lemma}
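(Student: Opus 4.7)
My plan is to prove the lemma by induction on $i$, interpreting the interval $[\lambda_1, \lambda_2]$ as its current value at the moment $D_i$ has just been produced (that is, after the narrowing on line~\ref{line:simplify_D} of iteration $i$). For the base case $i=0$, I would appeal directly to the initialization on line~7: the only walks of at most one edge are the trivial self-path (weight $0$) and single edges (weight $\lambda\cdot m_e + b_e$), which is exactly what $D_0$ stores. Each entry is already a single linear function on all of $\mathbb{R}$, so the invariant holds before any compaction is needed.

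For the inductive step, I would assume $D_{i-1}$ satisfies the invariant on the interval current at the start of iteration $i$ and run the standard matrix-squaring argument for all-pairs shortest paths. Any walk from $u$ to $v$ using at most $2^i$ edges splits at the vertex $w$ in position $\lfloor \text{length}/2\rfloor$ into two subwalks each of at most $2^{i-1}$ edges; conversely, concatenating any two such subwalks yields a walk of at most $2^i$ edges. Fixing any $\lambda$ in the current interval and applying the inductive hypothesis to both halves, the expression on line~\ref{line:define_D} evaluates at $\lambda$ to the minimum weight of such a walk. Since $\lambda \geq \lambda^*$ forces $G(\lambda)$ to be free of negative cycles (by Theorem~\ref{theorem:create_graph}, using the outer invariant $\lambda_1 \leq \lambda^*$), that minimum-weight walk can be taken to be a simple path, matching the claimed interpretation; the cases where no such path exists give $+\infty$ naturally from the min-plus algebra.

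The principal obstacle, and the step I would treat most carefully, is to show that the compaction on line~\ref{line:simplify_D} preserves this interpretation while narrowing the range. The key observation is that $B$ gathers every breakpoint of every entry of $D_i$, so any two consecutive points of $B \cup \{-\infty,+\infty\}$ bound an open interval on which each $D_i(u,v)$ coincides with a single linear function. I would justify the binary search by noting that its test, Bellman--Ford run on $G$ at the test value, reports a negative cycle precisely when that value is strictly less than $\lambda^*$ (by Theorem~\ref{theorem:create_graph}); hence the search correctly returns a subinterval $[\lambda_1', \lambda_2'] \subseteq [\lambda_1,\lambda_2]$ that still contains $\lambda^*$ and whose interior contains no breakpoint of any $D_i(u,v)$. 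Replacing each piecewise linear entry by its unique linear piece on $[\lambda_1', \lambda_2']$ therefore loses no information on that subinterval, re-establishing both invariants and preparing the next round of the induction.
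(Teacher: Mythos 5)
Your proof follows essentially the same route as the paper: induction on $i$, the base case from the initialization on line~7, the inductive step from the ``layover vertex'' matrix-squaring recurrence on line~\ref{line:define_D}, and preservation of the invariant under line~\ref{line:simplify_D} because the narrowed interval contains no breakpoints of any entry of $D_i$. That all matches.

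One sentence, however, is wrong and should simply be cut: you assert that every $\lambda$ in the current interval satisfies $\lambda \geq \lambda^*$ (citing the invariant $\lambda_1 \leq \lambda^*$) and therefore $G(\lambda)$ has no negative cycles, so the minimum-weight walk may be taken to be a simple path. The invariant is $\lambda_1 \leq \lambda^* \leq \lambda_2$, so $\lambda$ may well lie below $\lambda^*$, in which case $G(\lambda)$ \emph{does} contain negative cycles; moreover, even in a graph with no negative cycles a minimum-weight walk of at most $2^i$ edges need not be simple. The good news is that this step is not needed for the lemma: the recurrence computes the minimum weight over walks (edge sequences) of length at most $2^i$, and that is exactly what $D_i(u,v)$ is asserted to represent --- the bound on the number of edges is what keeps the value well-defined even in the presence of negative cycles. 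Remove the simple-path detour and the proof is correct and coincides with the paper's.
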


\begin{proof}
We argue by induction on $i$.  In the base case $i=0$, $D_i(u,v)$ must represent the weight of shortest path from $u$ to $v$ that includes up to $2^0=1$ edges.  The only such paths are trivial paths, for which $u=v$ and $D_i(u,v)=0$, and single edge paths, for which the path length equals the edge length.

For $i \geq 1$, each $D_i(u,v)$ is first defined as the lower envelope of two entries of $D_{i-1}$ in line \ref{line:define_D}, then redefined as a strictly linear function over the new smaller range $[\lambda_1, \lambda_2]$ in line \ref{line:simplify_D}, so we argue that the lemma holds after each assignment.  In the first assignment, $D_i(u,v)$ is defined to be the lower envelope of $[D_{i-1}(u,w)+D_{i-1}(w,v)]$ for all $w \in V$; in other words, every $w \in V$ is considered as a potential ``layover'' vertex, and $D_i(u,v)$ is defined as a piecewise linear function that may be defined by differing layover vertices throughout the range $[\lambda_1,\lambda_2]$.  By the inductive hypothesis, the $D_{i-1}$ values represent weights of minimum cost paths with at most $2^{i-1}$ edges; hence the resulting $D_i$ values represent weights of minimum cost paths with at most $2^{i-1}+2^{i-1}=2^i$ edges.

When $D_i(u,v)$ is reassigned in line \ref{line:simplify_D}, the range endpoints $\lambda_1$ and $\lambda_2$ have been contracted such that no entry of $D_i$ contains breakpoints in the range $[\lambda_1, \lambda_2]$.  Hence any individual $D_i(u,v)$ has no breakpoints in that range, and is replaced by a simple linear function.  This transformation preserves the condition that $D_i(u,v)$ represents the weight of the shortest path from $u$ to $v$ for any $\lambda \in [\lambda_1, \lambda_2]$.
\end{proof}

\begin{lemma}
\label{lemma:binary_search_decider}
Given two values $\lambda_1$ and $\lambda_2$ such that $\lambda_1 < \lambda_2$, it is possible to decide whether $\lambda^* < \lambda_1$, $\lambda^* > \lambda_2$, or $\lambda^* \in [\lambda_1, \lambda_2]$, in $O(n^3)$ time.
\end{lemma}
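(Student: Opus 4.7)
\smallskip\noindent
\textbf{Proof plan.} My plan is to reduce the three-way decision to two invocations of the Bellman--Ford negative-cycle detector: one at $\lambda = \lambda_1$ and one at $\lambda = \lambda_2$. The cornerstone observation is that the predicate ``$G(\lambda)$ contains a negative-weight cycle'' is monotone nonincreasing in $\lambda$. I would justify this by noting that each edge weight $\lambda m_e + b_e$ is a nondecreasing linear function of $\lambda$ because $m_e \geq 0$; hence the weight of any cycle is a nondecreasing linear function of $\lambda$, and once all cycle weights are nonnegative at some $\lambda_0$, they remain nonnegative for every $\lambda \geq \lambda_0$. Thus $\lambda^*$ acts as a sharp threshold: $G(\lambda)$ has a negative cycle precisely when $\lambda < \lambda^*$, and no negative cycle when $\lambda \geq \lambda^*$.

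Given this threshold structure, the decision procedure would proceed as follows. First I would run Bellman--Ford on $G(\lambda_1)$; since $G$ has $|V| = 2n$ vertices and $|E| \in O(n^2)$ edges by Definition~\ref{definition:G}, this costs $O(|V| \cdot |E|) = O(n^3)$ time. Then I would do the same for $G(\lambda_2)$ in another $O(n^3)$ time, so the total is $O(n^3)$. The two resulting bits pin down the case: if $G(\lambda_2)$ still has a negative cycle, then $\lambda^* > \lambda_2$; if $G(\lambda_1)$ already has no negative cycle, then $\lambda^* \leq \lambda_1$, which we report as $\lambda^* < \lambda_1$; otherwise $G(\lambda_1)$ has a negative cycle while $G(\lambda_2)$ does not, giving $\lambda_1 < \lambda^* \leq \lambda_2$, so $\lambda^* \in [\lambda_1, \lambda_2]$.

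There is no deep obstacle in this proof; the only subtlety is the boundary case in which $\lambda^*$ happens to coincide with an endpoint, so that ``$\lambda^* \leq \lambda_1$'' cannot be sharpened to ``$\lambda^* < \lambda_1$'' from the Bellman--Ford output alone. This is harmless in the enclosing binary search of Listing~\ref{algorithm:computing_lambda_star}, whose goal is only to locate a pair of consecutive breakpoints bracketing $\lambda^*$: any tie at a tested value can be broken either way without violating the invariant $\lambda_1 \leq \lambda^* \leq \lambda_2$. The bulk of the work is therefore the routine verification that Bellman--Ford on $G(\lambda_i)$ fits into the promised $O(n^3)$ budget.
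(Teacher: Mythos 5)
Your proof is correct and follows essentially the same plan as the paper's: two Bellman--Ford negative-cycle checks (one per endpoint), using the monotone-threshold structure of the predicate to sort $\lambda^*$ into one of the three cases, with the $O(n \cdot |V| \cdot |E|) = O(n^3)$ bound coming from $|V| = 2n$ and $|E| = O(n^2)$. The one noteworthy difference is how the threshold structure is justified: the paper cites Lemma~\ref{lemma:nnwc_implies_feasible} (feasibility of $\mathcal{L}$ implies no negative cycle), whereas you argue monotonicity directly from $m_e \geq 0$ -- every cycle weight is nondecreasing in $\lambda$, so the negative-cycle set is a half-line below $\lambda^*$. Your version is more self-contained and arguably cleaner, since it does not route through the LP reduction; you also correctly flag the $\lambda^* = \lambda_1$ boundary subtlety that the paper silently rounds off to a strict inequality, and you explain why it is harmless inside the enclosing binary search.
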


\begin{proof}
By Lemma \ref{lemma:nnwc_implies_feasible}, for any value $\lambda'$, if $G(\lambda')$ contains a negative cycle when $\lambda=\lambda'$, then $\lambda'<\lambda^*$.  So we can determine the ordering of $\lambda_1, \lambda_2$, and $\lambda^*$ using the Bellman--Ford shortest paths algorithm \cite{bellman1958,ford1962} to detect negative cycles, as follows.  First run Bellman--Ford, substituting $\lambda=\lambda_2$ to evaluate edge weights.  If we find a negative cycle, then report that $\lambda^* > \lambda_2$.  Otherwise run Bellman--Ford for $\lambda=\lambda_1$; if we find a negative cycle, then $\lambda^*$ must be in the range $[\lambda_1, \lambda_2]$.  If not, then $\lambda^* < \lambda_1$.  This decision process invokes the Bellman--Ford algorithm once or twice, and hence takes $O(n^3)$ time.
\end{proof}

\begin{lemma}
\label{lemma:lambda_star_time}
The algorithm presented in Listing \ref{algorithm:computing_lambda_star} runs in $O(n^3 \log^2 n)$ time.
\end{lemma}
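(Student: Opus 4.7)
The plan is to account for the cost of each of the $\lceil \log_2 n \rceil$ outer iterations separately and show that each costs $O(n^3 \log n)$.

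First I would establish the following invariant: at the start of iteration $i$, every entry $D_{i-1}(u,v)$ is a single linear function of $\lambda$ valid throughout the current range $[\lambda_1,\lambda_2]$. This is true at $i=1$ because $D_0$ is initialized to single linear (or constant, or $+\infty$) functions, and it is re-established at the end of every iteration by line~\ref{line:simplify_D}. Under this invariant, the assignment in line~\ref{line:define_D} computes $D_i(u,v)$ as the lower envelope of $n$ linear functions $D_{i-1}(u,w)+D_{i-1}(w,v)$, one per candidate layover vertex $w$. Such an envelope of $n$ lines has at most $n$ pieces and $n-1$ breakpoints, and can be built in $O(n\log n)$ time (e.g.\ by sorting the lines by slope and computing the appropriate convex hull). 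Summing over the $n^2$ pairs $(u,v)$, the matrix-squaring step costs $O(n^3 \log n)$ and produces a matrix whose entries collectively contain $|B| = O(n^3)$ breakpoints.

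Next I would analyze the binary search. Rather than sort $B$ explicitly (which would also fit in our budget), I would select the median of the remaining candidate breakpoints in $O(|B|)=O(n^3)$ time using linear-time selection, and then use Lemma~\ref{lemma:binary_search_decider} to decide in $O(n^3)$ time whether $\lambda^*$ lies below, within, or above the median's neighboring interval; one side of the breakpoint list is then discarded. After $O(\log |B|) = O(\log n)$ rounds, the surviving range $[\lambda_1,\lambda_2]$ contains no breakpoint of any entry of $D_i$. The binary search therefore costs $O(n^3\log n)$ per iteration, dominated by the $O(\log n)$ Bellman--Ford invocations on the $O(n)$-vertex, $O(n^2)$-edge graph $G(\lambda)$.

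Finally, the simplification in line~\ref{line:simplify_D} restores the invariant: since no breakpoint of $D_i(u,v)$ lies in $[\lambda_1,\lambda_2]$, the restriction of $D_i(u,v)$ to that range is a single linear piece, which can be located by binary search among its breakpoints in $O(\log n)$ time per entry, contributing $O(n^2 \log n)$ in all. Adding the three contributions, each iteration of the outer loop takes $O(n^3\log n)$ time, so the $\lceil \log_2 n\rceil$ iterations together cost $O(n^3\log^2 n)$; the final step that computes $\lambda^*$ inside the last surviving interval is an $O(n^2)$ minimization over the diagonal entries of $D_k$ and is absorbed into the bound.

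The main obstacle is the breakpoint bound: the algorithm's efficiency hinges on the fact that, because of the simplification at the end of each iteration, the squaring step is always a lower envelope of only $n$ (rather than arbitrarily many) linear functions, keeping $|B|$ polynomial and enabling an $O(\log n)$-round binary search. Absent this invariant, Carstensen's $n^{\Theta(\log n)}$ lower bound on breakpoint complexity would apply and the approach would fail.
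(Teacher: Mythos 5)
Your proof is correct and follows essentially the same decomposition as the paper's: $O(n^3\log n)$ for building the $n^2$ lower envelopes, $O(n^3\log n)$ for the binary search driven by the $O(n^3)$-time Bellman--Ford decision procedure of Lemma~\ref{lemma:binary_search_decider}, and lower-order terms for the simplification and the final computation of $\lambda^*$, summed over $O(\log n)$ outer iterations. The two places you deviate are cosmetic optimizations, not a different argument: the paper sorts $B$ once in $O(n^3\log n)$ and binary-searches the sorted list, whereas you use repeated linear-time median selection; and the paper finds the single linear piece by a linear scan over each envelope in $O(n)$ per entry, whereas you binary-search the (already sorted) breakpoints in $O(\log n)$ per entry. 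Both variants stay within the per-iteration budget, and neither changes the overall bound, so the proofs are essentially identical in structure and substance.
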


\begin{proof}
Each $D_{i-1}(u,v)$ is a linear function, so each $[D_{i-1}(u,w)+D_{i-1}(w,v)]$ is a linear function as well.  $D_i(u,v)$ is defined as the lower envelope of $n$ such linear functions, which may be computed in $O(n \log n)$ time \cite{textbook}.  So each $D_i(u,v)$ may be computed is $O(n \log n)$ time, and all $O(n^2)$ iterations of the first inner for loop take $O(n^3 \log n)$ total time.  Each $D_i(u,v)$ represents the lower envelope of $O(n)$ lines, and hence has $O(n)$ breakpoints.  So the entries of $D_i$ contain a total of $O(n^3)$ breakpoints, and they may all be collected and sorted into $B$ in $O(n^3 \log n)$ time.  Once sorted, any duplicate elements may be removed from $B$ in $O(|B|)=O(n^3)$ time.

Next our algorithm searches for a new, smaller $[\lambda_1,\lambda_2]$ range that contains $\lambda^*$.  Recall that $\lambda^*$ is the value of $\lambda$ for which $G(\lambda^*)$ contains no negative weight cycle, and every entry of $D_i$ is a piecewise linear function comprised of non-decreasing linear segments; so it is sufficient to search for the segment that intersects the $\lambda=0$ line.  We find this segment using a binary search in $B$.  At every step in the search, we decide which direction to seek using the decision process described in Lemma \ref{lemma:binary_search_decider}.  Each decision takes $O(n^3)$ time, and a binary search through the $O(n^2)$ elements of $B$ makes $O(\log n)$ decisions, so the entire binary search takes $O(n^3 \log n)$ time.

Replacing an entry of $D_i$ with a (non-piecewise) linear function may be done naively in $O(n)$ time by scanning the envelope for the piece that defines the function in the range $[\lambda_1, \lambda_2]$.  So the second inner for loop takes $O(n^3)$ total time, and the outer for loop takes a total of $O(n^3 \log^2 n)$ time.

The initialization before the outer for loop takes $O(n^2)$ time.  The last step of the algorithm is to compute $\lambda^*$, the smallest value in the range $[\lambda_1, \lambda_2]$ such that $D_k(v,v) \geq 0$ for every $v \in V$.  At this point each $D_i(u,v)$ is a non-piecewise increasing linear function, so this may be done by examining each of the $n$ linear functions $D_k(v,v)$, solving for its $\lambda$-intercept, and setting $\lambda^*$ to be the largest intercept.  This entire process takes $O(n^2)$ time, so the entire algorithm takes $O(n^3 \log^2 n)$ time.
\end{proof}

\begin{theorem}
\label{theorem:lambda_star_algorithm}
The algorithm presented in Listing \ref{algorithm:computing_lambda_star} calculates $\lambda^*$ in $O(n^3 \log^2 n)$ time.
\end{theorem}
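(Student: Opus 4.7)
The plan is to derive Theorem~\ref{theorem:lambda_star_algorithm} almost directly from the three preceding lemmas, with the only substantive work lying in stitching together the invariants. The running time bound of $O(n^3 \log^2 n)$ is already furnished by Lemma~\ref{lemma:lambda_star_time}, so I would begin the proof by citing it and then devote the bulk of the argument to correctness.

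For correctness, I would first verify inductively that the loop invariant $\lambda_1 \leq \lambda^* \leq \lambda_2$ is preserved. Initially the interval is $[-\infty,+\infty]$, which trivially contains $\lambda^*$. Within each iteration of the outer loop, the binary search performed over the sorted breakpoint set $B$ uses the decision procedure of Lemma~\ref{lemma:binary_search_decider} at each step; by that lemma, each subinterval chosen to continue the search still contains $\lambda^*$, so the final pair $(\lambda_1,\lambda_2)$ bracketing two consecutive breakpoints of $D_i$ also contains $\lambda^*$. The subsequent simplification of each $D_i(u,v)$ in line~\ref{line:simplify_D} is safe precisely because the contracted interval was chosen to lie strictly between consecutive breakpoints of the entries of $D_i$; hence over $[\lambda_1,\lambda_2]$ each piecewise linear function already coincides with a single linear piece, and replacing it introduces no loss of information. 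This keeps the inductive hypothesis of Lemma~\ref{lemma:lambda_star_correct} intact for the next iteration.

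Next I would argue that after the outer loop terminates with $k=\lceil \log_2 n \rceil$, we have $2^k \geq n$, so by Lemma~\ref{lemma:lambda_star_correct} the entry $D_k(v,v)$ equals the minimum weight of a closed walk from $v$ to itself using at most $n$ edges, evaluated as a linear function on $[\lambda_1,\lambda_2]$. Since any negative cycle can be chosen simple and thus uses at most $n$ edges, $G(\lambda)$ contains a negative-weight cycle if and only if $D_k(v,v) < 0$ for some $v \in V$. Moreover, each $D_k(v,v)$ is a nondecreasing linear function of $\lambda$ because every edge weight in $G(\lambda)$ has $m_e \geq 0$, and sums and pointwise minima of such functions inherit that monotonicity. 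Consequently, the smallest $\lambda \in [\lambda_1,\lambda_2]$ at which all $D_k(v,v) \geq 0$ is precisely $\lambda^*$, and it can be extracted by taking the maximum $\lambda$-intercept among the $n$ linear functions $D_k(v,v)$, as described in the final step of the listing.

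The main obstacle I anticipate is the bookkeeping argument that the simplification in line~\ref{line:simplify_D} does not break the invariant of Lemma~\ref{lemma:lambda_star_correct}; one must be careful that the binary search returns an interval whose endpoints are consecutive in the union of breakpoints of \emph{all} $n^2$ entries of $D_i$, not merely in the breakpoints of a single entry. Provided the set $B$ is formed as the union over all entries (as the listing specifies), the chosen interval is free of breakpoints in every $D_i(u,v)$ simultaneously, and the collapse to single linear functions is justified. With this point settled, the theorem follows by combining the correctness argument above with the timing analysis of Lemma~\ref{lemma:lambda_star_time}.
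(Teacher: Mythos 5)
Your proposal is correct and essentially matches the paper's (very terse, commented-out) proof, which simply cites Lemma~\ref{lemma:lambda_star_correct} for correctness of the $D_i$ tables and Lemma~\ref{lemma:lambda_star_time} for the $O(n^3\log^2 n)$ bound. You flesh out the glue the paper leaves implicit---preservation of the bracketing invariant via Lemma~\ref{lemma:binary_search_decider}, the fact that $2^{\lceil\log_2 n\rceil}\geq n$ so the diagonal entries $D_k(v,v)$ certify negative cycles, and monotonicity of those entries justifying the final intercept computation---but the substance and structure are the same.
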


\omit{
\begin{proof}
The theorem follows from Lemmas \ref{lemma:lambda_star_correct} and \ref{lemma:lambda_star_time}.
\end{proof}
}

\section{Extracting the Edge Weights}
\label{section:extracting_weights}

Once $\lambda^*$ has been calculated, all that remains is to calculate the weight of every edge in the output star.  Our approach is to create a new graph $G'$, which is a copy of $G(\lambda)$ with the addition of a new source node $s$ with an outgoing weight 0 edge to every $\overline{v}$ (see Figure \ref{figure:G_prime}).  We then compute the single source shortest paths of $G'$ starting at $s$, and define each $c_v$ to be a function of the shortest path lengths to $\overline{v}$ and $\underline{v}$.  This process is a straightforward application of the Bellman--Ford algorithm, and hence takes $O(n^3)$ time.  The remainder of this section is dedicated to proving the correctness of this approach.

\begin{figure}
\centering
\scalebox{\figurescale}{\includegraphics{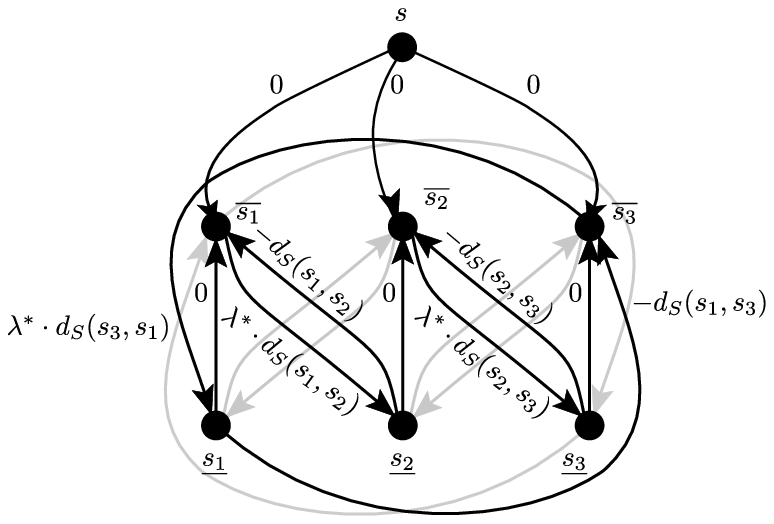}}
\caption{The graph $G'$ for $n=3$.  The weights of grayed edges are omitted.}
\label{figure:G_prime}
\end{figure}

\begin{definition}
Let $G'$ be a copy of the graph $G(\lambda)$ described in Definition~\ref{definition:G}, with all edge weights evaluated to real numbers for $\lambda=\lambda^*$, and the addition of a \emph{source vertex} $s$ with an outgoing 0-weight edge to every $\overline{v} \in G'$.  Let $P(v)$ be a shortest path from $s$ to $v$ for any vertex $v \in G'$, and let $l(v)$ be the total weight of any such $P(v)$.  The operation $P(v) \cup w$ yields the path formed by appending the edge $(v,w)$ to $P(v)$.
\end{definition}

\begin{definition}
Define $c_v = \frac{l(\underline{v}) - l(\overline{v})}{2}.$
\end{definition}

We now show that our choice of $c_v$ satisfies all three metric space properties.

\begin{lemma}
Every $c_v$ satisfies $c_v \geq 0$.
\end{lemma}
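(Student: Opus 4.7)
The plan is straightforward. The inequality $c_v \geq 0$ is equivalent to $l(\underline{v}) \geq l(\overline{v})$, so it suffices to exhibit a path from $s$ to $\overline{v}$ whose total weight does not exceed $l(\underline{v})$.

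The key observation I would use is the existence, in $G(\lambda^*)$ and hence in $G'$, of an edge from $\underline{v}$ to $\overline{v}$ of weight $-d_S(v,v) = 0$. This follows from Definition~\ref{definition:G} applied with $s = t = v$, together with the remark immediately following that definition that such ``self-loop-like'' edges have weight zero. Given this edge, I would take any shortest $s$-to-$\underline{v}$ path $P(\underline{v})$, which has length $l(\underline{v})$, append the zero-weight edge $(\underline{v}, \overline{v})$, and observe that the resulting $s$-to-$\overline{v}$ path has total weight exactly $l(\underline{v})$. Since $l(\overline{v})$ is by definition the minimum weight of any $s$-to-$\overline{v}$ path, this yields $l(\overline{v}) \leq l(\underline{v})$, and therefore $c_v = \frac{l(\underline{v}) - l(\overline{v})}{2} \geq 0$.

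The only preliminary point worth addressing is that the quantities $l(v)$ are well-defined finite real numbers: by the defining property of $\lambda^*$, the graph $G(\lambda^*)$ contains no negative-weight cycle, and since $s$ has only outgoing edges in $G'$ it cannot lie on any cycle, so $G'$ also has no negative cycle. Consequently every vertex is reachable from $s$ (via the zero-weight edges to each $\overline{v}$, and then to $\underline{v}$ through the edge above) and Bellman--Ford assigns a well-defined finite length to every vertex. I do not anticipate any real obstacle here; the whole proof is essentially a one-line observation once one notices the zero-weight $\underline{v} \to \overline{v}$ edge.
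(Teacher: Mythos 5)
Your proof is correct and follows the same route as the paper: the paper's entire argument is the one-sentence observation that $G'$ contains a zero-weight edge from $\underline{v}$ to $\overline{v}$, from which $l(\overline{v}) \leq l(\underline{v})$ and hence $c_v \geq 0$ follow immediately. You have simply written out explicitly the shortest-path relaxation step and the well-definedness of the $l(\cdot)$ values that the paper leaves implicit.
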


\omit{
\begin{proof}
Suppose $P(\underline{v})$ visits vertices $\langle s, \overline{x}, \ldots, \underline{v} \rangle$.  By assumption all cycles have nonnegative weight, so the cycle $\langle \overline{x}, \ldots, \underline{v}, \overline{x} \rangle$ has nonnegative weight.  Hence $l(\underline{v}) - d_S(v,x) \geq 0$.  By definition $d_S(v,x) \geq 0$, so
\[ l(\underline{v}) \geq 0 . \]
\noindent By construction $G'$ contains an edge from $s$ to $\overline{v}$ with weight 0; hence the weight of the shortest path to $\overline{v}$ must obey $l(\overline{v}) \leq 0$.  By substitution,
\begin{eqnarray*}
l(\underline{v}) & \geq & l(\overline{v}) \\
\frac{l(\underline{v}) - l(\overline{v})}{2} & \geq & 0 \\
(c_v) & \geq & 0.
\end{eqnarray*}
\end{proof}
}

\begin{proof}
For each vertex $v \in G'$ there exists an edge from $\underline{v}$ to $\overline{v}$ with weight 0.
\end{proof}

\begin{lemma}
Every distinct $c_v$ and $c_w$ satisfy $c_v + c_w \geq d_S(v,w).$
\end{lemma}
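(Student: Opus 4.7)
The plan is to exploit two key edges of $G'$: by Definition~\ref{definition:G}, $G'$ contains an edge from $\underline{v}$ to $\overline{w}$ of weight $-d_S(v,w)$ and an edge from $\underline{w}$ to $\overline{v}$ of weight $-d_S(w,v) = -d_S(v,w)$, for every pair of distinct sites $v,w$. Since $l(\cdot)$ denotes shortest path lengths from the source $s$, the standard shortest-path relaxation condition holds along each edge of $G'$.

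Specifically, prepending $P(\underline{v})$ to the edge $(\underline{v},\overline{w})$ produces an $s$-to-$\overline{w}$ walk of total weight $l(\underline{v}) - d_S(v,w)$, so minimality of $l(\overline{w})$ yields
\[
l(\overline{w}) \;\leq\; l(\underline{v}) - d_S(v,w),
\]
i.e.\ $l(\underline{v}) - l(\overline{w}) \geq d_S(v,w)$. By the same argument applied to the edge $(\underline{w},\overline{v})$, one obtains $l(\underline{w}) - l(\overline{v}) \geq d_S(v,w)$.

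Adding the two inequalities and dividing by $2$ gives
\[
\frac{l(\underline{v}) - l(\overline{v})}{2} + \frac{l(\underline{w}) - l(\overline{w})}{2} \;\geq\; d_S(v,w),
\]
which is exactly $c_v + c_w \geq d_S(v,w)$ by the definition of $c_v$ and $c_w$.

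There is essentially no obstacle here; the one thing to be careful about is that the shortest paths $l(\overline{w})$ and $l(\overline{v})$ are in fact well-defined real numbers. This is guaranteed because $\lambda^*$ was chosen so that $G(\lambda^*)$, and hence $G'$, contains no negative-weight cycles, so Bellman--Ford terminates with finite shortest path distances from $s$ to every vertex (each vertex is reachable from $s$ via the $0$-weight edges to the $\overline{v}$ nodes).
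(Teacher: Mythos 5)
Your proof is correct and follows essentially the same argument as the paper: derive $l(\overline{w}) \leq l(\underline{v}) - d_S(v,w)$ and its symmetric counterpart from the edge-relaxation property of shortest paths, then add and divide by two. You spell out the relaxation step and the finiteness of the $l(\cdot)$ values slightly more explicitly, but the structure is identical.
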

\begin{proof}
By the definition of shortest paths, we have
\begin{eqnarray*}
l(\overline{w}) &\leq& l(\underline{v}) - d_S(v,w) \\
d_S(v,w) &\leq& l(\underline{v}) - l(\overline{w}) .
\end{eqnarray*}
\noindent and by symmetric arguments,
\[ d_S(w,v) \leq l(\underline{w}) - l(\overline{v}). \]
\noindent Adding these inequalities, we obtain
\begin{eqnarray*}
d_S(v,w) + d_S(w,v) &\leq& l(\underline{v}) - l(\overline{w}) + l(\underline{w}) - l(\overline{v}) \\
d_S(v,w) &\leq& \frac{l(\underline{v}) - l(\overline{v})}{2} + \frac{l(\underline{w}) - l(\overline{w})}{2} \\ 
d_S(v,w) &\leq& (c_v) + (c_w) .
\end{eqnarray*}
\end{proof}

\begin{lemma}
Every distinct $c_v$ and $c_w$ satisfy $c_v + c_w \leq \lambda \cdot d_S(v,w).$
\end{lemma}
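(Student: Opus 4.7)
The plan is to mirror the previous lemma's argument, but to relax along the other class of edges in $G'$. Recall from Definition~\ref{definition:G} that for every pair $v,w \in S$ with $v \neq w$, the graph $G(\lambda)$ contains an edge from $\overline{v}$ to $\underline{w}$ of weight $\lambda \cdot d_S(v,w)$. In $G'$ this edge has weight $\lambda^* \cdot d_S(v,w)$, so the standard shortest-path relaxation inequality yields
\[ l(\underline{w}) \leq l(\overline{v}) + \lambda^* \cdot d_S(v,w), \]
i.e.\ $l(\underline{w}) - l(\overline{v}) \leq \lambda^* \cdot d_S(v,w)$.

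Next, I would invoke the same inequality with the roles of $v$ and $w$ swapped to obtain $l(\underline{v}) - l(\overline{w}) \leq \lambda^* \cdot d_S(w,v)$, and then apply symmetry of $d_S$ to rewrite the right-hand side as $\lambda^* \cdot d_S(v,w)$. Adding the two inequalities, dividing by $2$, and recognizing the definitions of $c_v$ and $c_w$ gives
\[ c_v + c_w \;=\; \frac{l(\underline{v})-l(\overline{v})}{2} + \frac{l(\underline{w})-l(\overline{w})}{2} \;\leq\; \lambda^* \cdot d_S(v,w), \]
which is precisely the claim (with $\lambda = \lambda^*$).

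There is no real obstacle here; the proof is a two-line consequence of the shortest-path subadditivity along existing edges, dual to the previous lemma which used the $-d_S$-weighted edges from $\underline{s}$ to $\overline{t}$. The only point worth flagging is that the edges of weight $\lambda^* \cdot d_S(v,w)$ exist in $G(\lambda)$ only when $v \neq w$, so the hypothesis that $c_v$ and $c_w$ are distinct is exactly what legitimizes the use of these edges.
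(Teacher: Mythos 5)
Your proof is correct and is essentially the same as the paper's: both apply the shortest-path relaxation inequality along the two $\lambda\cdot d_S$-weighted edges $\overline{v}\to\underline{w}$ and $\overline{w}\to\underline{v}$, use symmetry of $d_S$, add, and divide by two. The only cosmetic difference is that you invoke the relaxation inequality directly, while the paper phrases it as appending an edge to the shortest paths $P(\overline{w})$ and $P(\overline{v})$.
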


\begin{proof}
Observe that the path $P(\overline{w}) \cup \underline{v}$ is a path to $\underline{v}$ with weight $l(\overline{w})+\lambda \cdot d_S(w,v)$, and that the path $P(\overline{v}) \cup \underline{w}$ is a path to $\underline{w}$ with weight $l(\overline{v})+\lambda \cdot d_S(v,w)$.  By definition $P(\underline{v})$ is a shortest path to $\underline{v}$, and similarly $P(\underline{w})$ is a shortest path to $\underline{w}$, so we have
\[ l(\underline{v}) \leq l(\overline{w}) + \lambda \cdot d_S(v,w) \]
\noindent and
\[ l(\underline{w}) \leq l(\overline{v}) + \lambda \cdot d_S(v,w) .\]
\noindent Adding these inequalities, we obtain
\[ l(\underline{v}) + l(\underline{w}) \leq \left(l(\overline{w}) + \lambda \cdot d_S(w,v)\right) + \left(l(\overline{v}) + \lambda \cdot d_S(v,w)\right) .\]
\noindent By assumption $d_S(w,v)=d_S(v,w)$, so
\begin{eqnarray*}
l(\underline{v}) - l(\overline{v}) + l(\underline{w}) - l(\overline{w}) &\leq& 2 \lambda \cdot d_S(v,w) \\
\omit{\frac{l(\underline{v}) - l(\overline{v})}{2} + \frac{l(\underline{w}) - l(\overline{w})}{2} &\leq& \lambda \cdot d_S(v,w) \\}
(c_v) + (c_w) &\leq& \lambda \cdot d_S(v,w) .
\end{eqnarray*}
\end{proof}

\omit{
We now summarize this section's result in a theorem.
}

\begin{theorem}
\label{theorem:find_lambda}
Given $S$ and the corresponding $G(\lambda)$ and $\lambda^*$, a set $C$ of edge lengths $c_v$ for each $v \in S$,  such that for every $v \in S$
\[ c_v \geq 0 \]
\noindent and for every distinct $v,w \in S$
\[ c_v + c_w \geq d_S(v,w) \]
\[ c_v + c_w \leq \lambda \cdot d_S(v,w) \]
\noindent may be computed in $O(n^3)$ time.
\end{theorem}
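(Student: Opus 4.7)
The plan is straightforward because the three preceding lemmas in this section already verify the three required inequalities on the $c_v$ values; what remains is to package them into an algorithm and bound its running time.

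First I would build $G'$ explicitly: make a copy of $G(\lambda)$ with every parametric edge weight evaluated at $\lambda = \lambda^*$, add the source vertex $s$, and attach a zero-weight edge from $s$ to every $\overline{v}$. This takes $O(n^2)$ time, since $G(\lambda)$ already has $O(n)$ vertices and $O(n^2)$ edges and only $n$ new edges are added. Next I would run the Bellman--Ford single-source shortest paths algorithm from $s$ to obtain the labels $l(v)$ used in the preceding definitions; since $G'$ has $O(n)$ vertices and $O(n^2)$ edges, this runs in $O(n^3)$ time. Finally, for each $v \in S$, I would set $c_v = (l(\underline{v}) - l(\overline{v}))/2$, which is an additional $O(n)$ work.

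For correctness, one must first check that Bellman--Ford is well-defined on $G'$. By the defining property of $\lambda^*$, the graph $G(\lambda^*)$ contains no negative-weight cycle. The only edges added in passing from $G(\lambda^*)$ to $G'$ emanate from $s$, and $s$ has no incoming edges, so every cycle of $G'$ lies entirely in $G(\lambda^*)$ and hence has nonnegative weight; Bellman--Ford therefore terminates with finite, correct shortest-path distances $l(v)$. The three inequalities on the $c_v$ required by the theorem are then exactly the statements of the three lemmas proved immediately above, instantiated at this choice of $c_v$, so no further verification is needed.

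The only step that is more than bookkeeping is the well-definedness observation for Bellman--Ford; once that is in hand, the theorem is a direct assembly of the preceding lemmas, with total running time dominated by the single Bellman--Ford invocation at $O(n^3)$.
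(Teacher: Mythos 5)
Your proposal matches the paper's approach exactly: construct $G'$ by adding a source $s$ with zero-weight edges to each $\overline{v}$, run Bellman--Ford from $s$, set $c_v = (l(\underline{v}) - l(\overline{v}))/2$, and invoke the three preceding lemmas for correctness, with the $O(n^3)$ bound coming from the single Bellman--Ford call. Your explicit check that $G'$ has no negative cycles (because $s$ has no incoming edges) is a detail the paper leaves implicit, and is a welcome clarification rather than a departure.
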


Theorem \ref{theorem:find_lambda} establishes that for any $\lambda^*$ there exists a set $C$ of valid edge lengths.  This completes the proof of Theorem \ref{theorem:create_graph}.

\section{Conclusion}
\label{section:conclusion}

Finally we codify the main result of the paper as a theorem.

\begin{theorem}
Given a set $S \subseteq X$ of $n$ sites from a metric space $\mathcal{X} = (X,d)$, it is possible to generate a weighted star $H$ such that the distances between vertices of $H$ obey the triangle inequality, and such that $H$ has the smallest possible dilation among any such star, in $O(n^3 \log^2 n)$ time.
\end{theorem}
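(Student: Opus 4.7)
The plan is to assemble the three main results of the paper (Theorems~\ref{theorem:create_graph}, \ref{theorem:lambda_star_algorithm}, and \ref{theorem:find_lambda}) into a single pipeline whose running time is dominated by the middle step. First I would formulate the minimum dilation star problem as the linear program $\mathcal{L}$ of Definition~\ref{definition:L}; the lemma following that definition already guarantees that an optimal solution to $\mathcal{L}$ yields both the optimal dilation $\lambda^*$ and an associated set of edge lengths $c_v$ for the desired star $H$. Thus it suffices to compute $\lambda^*$ and a matching feasible assignment of the $c_v$ variables.

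Next I would invoke Theorem~\ref{theorem:create_graph} to build the $\lambda$-graph $G(\lambda)$ in $O(n^2)$ time. By that theorem, $G(\lambda)$ contains a negative weight cycle precisely when $\mathcal{L}$ is infeasible for that value of $\lambda$, so the smallest $\lambda$ for which $G(\lambda)$ has no negative cycle coincides with the optimum of $\mathcal{L}$. I would then run the algorithm of Listing~\ref{algorithm:computing_lambda_star} on $G(\lambda)$; Theorem~\ref{theorem:lambda_star_algorithm} gives this value in $O(n^3 \log^2 n)$ time, and this is the asymptotically dominant step of the whole construction.

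Once $\lambda^*$ is known, I would apply the extraction procedure of Section~\ref{section:extracting_weights} to recover edge lengths $c_v$: form $G'$, run Bellman--Ford from the added source, and set $c_v = (l(\underline{v}) - l(\overline{v}))/2$. Theorem~\ref{theorem:find_lambda} tells us that the resulting $c_v$ are nonnegative, satisfy $c_v + c_w \geq d_S(v,w)$, and satisfy $c_v + c_w \leq \lambda^* \cdot d_S(v,w)$, at a cost of $O(n^3)$ time. The star $H$ whose hub $c$ is connected to each $v \in S$ by an edge of length $c_v$ is then the output.

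Finally I would verify that $H$ is indeed a minimum-dilation star. Nonnegativity of the $c_v$ together with $c_v + c_w \geq d_S(v,w)$ guarantees that path distances in $H$ never decrease with respect to $d_S$ and satisfy the metric conditions required of the output. The bound $c_v + c_w \leq \lambda^* \cdot d_S(v,w)$ certifies that the dilation of $H$ is at most $\lambda^*$, while the equivalence established by Theorem~\ref{theorem:create_graph} forbids any star of dilation strictly smaller than $\lambda^*$ (else $\mathcal{L}$ would be feasible for a smaller $\lambda$, contradicting the definition of $\lambda^*$). Summing the three stages yields an overall running time of $O(n^2) + O(n^3 \log^2 n) + O(n^3) = O(n^3 \log^2 n)$. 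There is no new obstacle at this point; the only nontrivial work has already been discharged inside the cited theorems, and the main difficulty in the entire paper, namely the parametric negative cycle detection, is exactly what Theorem~\ref{theorem:lambda_star_algorithm} supplies.
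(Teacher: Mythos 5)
Your proposal matches the paper's (elided) proof, which simply cites Theorems~\ref{theorem:create_graph}, \ref{theorem:lambda_star_algorithm}, and \ref{theorem:find_lambda} as the three-stage pipeline; you spell out the same chain of reductions with a bit more justification of correctness and optimality. The approach and the running-time accounting are both identical to the paper's.
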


\omit{
\begin{proof}
The result follows from Theorems \ref{theorem:create_graph}, \ref{theorem:lambda_star_algorithm}, and \ref{theorem:find_lambda}.
\end{proof}
}

\subsubsection*{Acknowledgements}

This work was supported in part by NSF grant
0830403 and by the Office of Naval Research under grant
N00014-08-1-1015.

\small\raggedright
\bibliographystyle{abbrv} \bibliography{star_metrics}

\end{document}